%%%%%%%%%%%%%%%%%%%%%%%%%%%%%%%%%%%%%%%%%%%%%%%%%%%%%%%%%%%%%%%%%%%%%%%%%%%%%%%
\documentclass[12pt,leqno,fleqn]{amsart}
\usepackage{latexsym}
\usepackage{amssymb}
\usepackage{amsxtra}
\usepackage{amsmath}
\usepackage{amsfonts}
\usepackage[dvips]{graphicx}
\usepackage{amsmath}
\usepackage{amssymb,amsmath,amsthm,amscd,mathrsfs,amsbsy,amsfonts}
\usepackage{pstricks,pst-node}
\usepackage{amsbsy,young,colortbl,cite}

\usepackage{graphicx}
\usepackage[metapost]{mfpic}
\usepackage{mflogo}

\usepackage{epstopdf}
\DeclareGraphicsRule{.tif}{png}{.png}{`convert #1 `basename #1 .tif`.png}

\usepackage{graphics}

\setlength{\abovecaptionskip}{1pt plus0.5pt minus0.5pt}

\def\@makecaption#1#2{\vskip\abovecaptionskip
  \sbox\@tempboxa{\small #1: #2}%
  \ifdim \wd\@tempboxa >\hsize \small #1: #2\par
  \else \global \@minipagefalse \hb@xt@\hsize{\hfil\box\@tempboxa\hfil}\fi
  \vskip\belowcaptionskip}
\textheight 23truecm
\textwidth 17truecm
\setlength{\topmargin}{-0.5 cm}
\setlength{\oddsidemargin}{-0.5 cm}
\setlength{\evensidemargin}{-0.5 cm}
\pagestyle{plain}
%%%%%%%%%%%%%%%%%%%%%%%%%%%%%%%%%%%%%%%%%%
 %%%%%%%%%%%%Equation counting %%%%%%%%%%%%%%%%%

\newcommand{\cleqn}{\setcounter{equation}{0}}
\newcommand{\clth}{\setcounter{theorem}{0}}
\newcommand {\sectionnew}[1]{\section{#1}\cleqn\clth}
%%%%%%%%%%%%%%%%%%%%%%%%%%%%%%%%%%%%%%%%%%%%%%%%%%%%%%%%%%%%%%5
\newtheorem{theorem}{Theorem}[section]

\newtheorem{proposition}[theorem]{Proposition}

\newtheorem{definition}[theorem]{Definition}
\newtheorem{example}[theorem]{Example}

\theoremstyle{remark}

%\newtheorem{proof}{Proof}[section]
%%%%%%%%%%%%%%%%%%%%%%%%%%%%%%%%%%%%%%%%%%%%%%%%%%%%%%%%%%%%%

\renewcommand{\P}{\mathcal{P}}

\renewcommand{\L}{\mathcal{L}}

\newcommand{\Z}{\mathbb{Z}}

\renewcommand{\H}{\mathcal{H}}

\def\({\left(}
\def\){\right)}
\def\[{\begin{eqnarray*}}
\def\]{\end{eqnarray*}}
\def\d{\partial}
\def\ep{\epsilon}
\def\La{\Lambda}
\def\la{\lambda}

\def\om{\omega}

%%%%%%%%%%%%%%%%%%%%%%%%%%%%%%%%%%%%%%%%%%%%%%%%%%%%%%%%%%%%%%%%%%%%%%%%%
\begin{document}

\title{ \Large \bf \boldmath\ \\ Regular solution and lattice miura transformation  of  bigraded Toda Hierarchy*}
\author{\large  Chuanzhong LI\dag\ddag
, Jingsong HE\dag }

\thanks{ \begin{tabular}{@{}r@{}p{13.4cm}@{}}
& Manuscript received  \\ %date
$\dag$ & Department of Mathematics,  Ningbo University, Ningbo, 315211, China\\
&{E-mail:lichuanzhong@nbu.edu.cn} \\
$^{\ddag}$ & Department of Mathematics,  USTC, Hefei, 230026, China\\
&{Corresponding email:hejingsong@nbu.edu.cn}\\
 $^{\ast}$ & Project supported by
 Natural Science Foundation  of Zhejiang Province under Grant No.LY12A01007, the National Natural Science Foundation of China under Grant  No.11201251, 10971109, K. C. Wong Magna Fund in
Ningbo University.
\end{tabular}}
\texttt{}

\date{}
\begin{abstract}
In this paper, we give finite dimensional exponential  solutions of the bigraded Toda Hierarchy(BTH). As
 an specific example of exponential solutions of the BTH, we consider a regular solution for the $(1,2)$-BTH with $3\times 3$-sized Lax matrix, and
 discuss some geometric structure of the solution from which the difference between $(1,2)$-BTH and original Toda hierarchy is shown. After this, we construct another kind of Lax representation of $(N,1)$-bigraded Toda hierarchy($(N,1)$-BTH) which does not use the fractional operator of Lax operator. Then we introduce lattice Miura transformation of $(N,1)$-BTH which leads to equations depending on one field, meanwhile we give some specific examples which contains Volterra lattice equation(an useful ecological competition model).
\end{abstract}
\maketitle
\noindent
Keywords:  Regular solution, lattice miura transformation, bigraded Toda Hierarchy, moment polytope, Volterra lattice.\\
2000 Mathematics Subject Classifications   37K05, 37K10, 37K20

\allowdisplaybreaks
 \setcounter{section}{0}

 \sectionnew{Introduction}
 %%%%%%%%%%%%%%%%%%%%%%%%%%%%%%%%%%%%%%%%%%%%%%%%
   The Toda lattice hierarchy introduced by M. Toda \cite{Toda,Todabook}
  is a completely integrable system and has important applications in many different fields such
 as classical and quantum fields  theory.   It is well-known that
 the Toda lattice equation \cite{UT} can be reduced from the two-dimensional
Toda hierarchy. Adding additional logarithm flows to the Toda lattice hierarchy,
it becomes the extended Toda hierarchy\cite{CDZ} which governs the Gromov-Witten invariant of $CP^1$.
The bigraded Toda hierarchy (BTH) of $(N,M)$-type(or simply the $(N,M)$-BTH) is the generalized Toda lattice hierarchy
whose infinite Lax matrix has $N$ upper and $M$ lower nonzero diagonals.
The BTH can be seen as a natural extension of the original Toda lattice hierarchy which is just of $(1,1)$-type. The BTH can be also treated as a general reduction of the two-dimensional Toda lattice hierarchy. The extended bigraded Toda
hierarchy(EBTH) is the extension of the BTH which includes additional logarithm flows\cite{C}. The dispersionless version of extended bigraded
Toda hierarchy was firstly introduced by S. Aoyama, Y. Kodama in
\cite{KodamaCMP}.  Later dispersive extended bigraded Toda hierarchy was
introduced by Gudio Carlet \cite{C} who hoped that the EBTH might
also be relevant to the theory of Gromov-Witten invariants.

 We generalized the
Sato theory to the EBTH  and give the HBEs of EBTH in \cite{ourJMP}.
The close relation of the BTH and the two-dimensional Toda hierarchy becomes a great motivation for us to consider the solutional
structure of the BTH. In paper \cite{solutionBTH},
 we prove the BTH( bigraded toda hierarchy) has an  equivalent relation between $(N,M)$-BTH
 (whose infinite Lax matrix has $N$ upper and $M$ lower nonzero diagonals)
and $(M,N)$-BTH. In paper \cite{ourBlock,dispBTH},  the BTH is proved to have a natural Block type Lie algebraic symmetry and so is dispersionless BTH.
As we know, $(N,M)$-BTH is equivalent to $(N,M)$-band bi-infinite matrix-formed Toda hierarchy, so we consider its reduction,
 i.e. the semi-finite and finite matrix form of the BTH.  Then
we give solutions of the BTH  using orthogonal polynomials in matrix
form. Some rational solutions of the BTH and corresponding
Young diagrams were also given in \cite{solutionBTH}. But there is one missing part is about the regular exponential solutions of BTH. Therefore from the general structure of
solutions of  the BTH, regular  exponential solution which only
depend on primary time variables  will be introduced in this article. Also we will tell the difference between $(1,2)$-BTH and original Toda hierarchy from the orbit of flows in the graph of diagonal elements. Because its structure of Flag manifold  similar as original Toda hierarchy, a geometric description using moment polytope \cite{topology} whose vertices correspond to solutions of the BTH.

Comparing the equations for primary flows of the $(N,1)$-BTH with equations constructed in \cite{svinin}, we find that they have very close relation which will be shown in detail in the following sections.

The paper is organized as follows. In Section 2, the definition of the BTH and its tau function are given. In Section 3,
 the exponential  solutions of the BTH will be given where we also  consider the finite dimensional exponential solutions and further survey into the  $(1,M)$-BTH. To see the geometry of the $(N,M)$-BTH, we consider
  the regular solution for the $(1,2)$-BTH with $3\times 3$ Lax matrix, and
 discuss the geometric structure of the solution, i.e. the moment polytope for the $(1,2)$-BTH in Section 4.  In Section 5,  some primary flows of $(N,1)$-BTH  will be introduced. In Section 6,  we construct another kind of Lax representation of
the BTH. In Section 7,  lattice miura transformation of the BTH will be given, meanwhile, some concrete examples will be shown in detail.
After that conclusion and discussion are devoted to the last section.

\section{The bigraded Toda hierarchy (BTH)}\label{sec:BTH}
Firstly, the interpolated bigraded Toda hierarchy will be introduced firstly.
The Lax operator of the BTH is given by the Laurent polynomial of shift matrix $\Lambda$ \cite{C}
\begin{equation}\label{LBTH}
\L:=\Lambda^{N}+u_{N-1}\Lambda^{N-1}+\cdots+u_0+\cdots + u_{-M}
\Lambda^{-M},
\end{equation}
where
   $N,M \geq1$, $\Lambda$ represents the shift operator with $\Lambda:=e^{\epsilon\partial_x}$ and $``\epsilon"$ is
called the string coupling constant, i.e. for any function $ f(x)$
\begin{equation*}
\Lambda f(x)=f(x+\epsilon).
\end{equation*}

 The $\L$ can be
written in two different ways by dressing the shift operator
\begin{equation}\label{two dressing}
\L=\P_L\Lambda^N\P_L^{-1} = \P_R \Lambda^{-M}\P_R^{-1},
\end{equation}
where the dressing operators have the form,
 \begin{align}
 \P_L&=1+w_1\Lambda^{-1}+w_2\Lambda^{-2}+\ldots,
\label{dressP}\\[0.5ex]
 \P_R&=\tilde{w_0}+\tilde{w_1}\Lambda+\tilde{w_2}\Lambda^2+ \ldots.
\label{dressQ}
\end{align}
Note that $\P_L$ is lower triangular Matrix, $\P_R$  is upper triangular Matrix.

Eq.\eqref{two dressing} are quite important because it gives the reduction condition from the two-dimensional Toda lattice hierarchy.
 The
pair is unique up to multiplying $\P_L$ and $\P_R$ from the right
 by operators in the form  $1+
a_1\Lambda^{-1}+a_2\Lambda^{-2}+...$ and $\tilde{a}_0 +
\tilde{a}_1\Lambda +\tilde{a}_2\Lambda^2+\ldots$ respectively with
coefficients independent of $x$. Given any difference operator $A=
\sum_k A_k \Lambda^k$, the positive and negative projections are
defined by $A_+ = \sum_{k\geq0} A_k \Lambda^k$ and $A_- = \sum_{k<0}
A_k \Lambda^k$.

To write out  explicitly the Lax equations  of the BTH,
  fractional powers $\L^{\frac1N}$ and
$\L^{\frac1M}$ are defined by
\begin{equation}
  \notag
  \L^{\frac1N} = \Lambda+ \sum_{k\leq 0} a_k \Lambda^k , \qquad \L^{\frac1M} = \sum_{k \geq -1} b_k
  \Lambda^k,
\end{equation}
with the relations
\begin{equation}
  \notag
  (\L^{\frac1N} )^N = (\L^{\frac1M} )^M = \L.
\end{equation}
$\L^{\frac{1}{N}}$  and $\L^{\frac{1}{M}}$ are lower Heisenberg
triangular Matrix and upper triangular Matrix respectively.

Acting on free function, these two fraction powers can be seen as two different locally expansions around zero and infinity respectively.
It was  stressed that $\L^{\frac1N}$ and $\L^{\frac1M}$ are two
different operators even if $N=M(N, M\geq 2)$ in \cite{C} due to two different dressing operators. They can also be
expressed as following
\begin{equation}
\notag
  \L^{\frac1N} = \P _{L}\Lambda\P_{L}^{-1}, \qquad \L^{\frac1M} = \P_{R}\Lambda^{-1} \P_{R}^{ -1}.
\end{equation}
Let  us now define the following operators for the generators of the BTH flows,
\begin{equation}
  B_{\gamma , n} :=\left\{
\begin{array}{llll}
 \displaystyle{ \L^{n+1-\frac{\alpha-1}{N}} }&\qquad {\rm if}\quad \gamma=\alpha=1,2,\ldots,N,\\[2.0ex]
 \displaystyle{   \L^{n+1+\frac{\beta}{M}}} &\qquad{\rm if}\quad\gamma=\beta = -M+1,\ldots,-1,0,
\end{array}\right.
\end{equation}
\begin{definition}
  Bigraded Toda hierarchy(BTH) in the Lax representation is given by the set of
infinite number of flows defined by(\cite{solutionBTH})
\begin{equation}\label{BTHLax}
\frac{\partial \L}{\partial t_{\gamma, n}} =\left\{
\begin{array}{lll}
[ (B_{\alpha,n})_+, \L ], &\quad {\rm if} \quad\gamma=\alpha= 1,2,\ldots,N,\\[1.5ex]
[ -(B_{\beta,n})_-, \L ], &\quad{\rm if}\quad \gamma=\beta=   -M+1,\ldots,-1,0.
\end{array}\right.
\end{equation}
\end{definition}
Sometimes we denote $\frac{\partial }{\partial t_{\gamma, n}}$ as $\partial_{\gamma, n}$ in this paper and denote $\{t_{\gamma, n},\gamma=\alpha= 1,2,\ldots,N,\},\{t_{\gamma, n},\gamma=\beta=   -M+1,\ldots,-1,0,\}$ as $t_{\alpha},t_{\beta}$ respectively.
We call the flows for $n=0$ the {\it primaries} of the BTH and the time variables $t_{\gamma, n}$ ($n=0$) are primary time variables.
The original tridiagonal Toda  hierarchy corresponds to the case with $N=M=1$.

\subsection{Tau function and band structure}
According to paper \cite{ourJMP},
a function $\tau$  depending only on the dynamical variables $t$ and
$\epsilon$ is called the  {\em  tau-function of BTH} if it
provides symbols related to wave operators as following,
\begin{eqnarray}\label{pltau}P_L: &=&
1+\frac{w_1}{\lambda}+\frac{w_2}{\lambda^2}+\ldots : =\frac{ \tau
(x, t-[\lambda^{-1}]^N;\epsilon) }
     {\tau (x,t;\epsilon)},\\\label{pl-1tau}
P_L^{-1}:& = &1+\frac{w_1'}{\lambda}+\frac{w_2'}{\lambda^2}+\ldots
: = \frac{\tau (x+\epsilon,
t+[\lambda^{-1}]^N;\epsilon) }
     {\tau (x+\epsilon,t;\epsilon)},\\\label{prtau}
P_R:&= &\tilde w_0 + \tilde w_1\lambda+ \tilde w_2\lambda^{2}
+\ldots : = \frac{ \tau
(x+\epsilon,t+[\lambda]^M;\epsilon)}
     {\tau(x,t;\epsilon)},\\\label{pr-1tau}
     P_R^{-1}:&= &\tilde w_0 '+ \tilde w_1'\lambda+ \tilde
w_2'\lambda^{2} +\ldots : = \frac{
     \tau (x,t-[\lambda]^M; \epsilon)}
     {\tau(x+\epsilon,t;\epsilon)},
     \end{eqnarray}
     where $[\lambda^{-1}]^N$ and $[\lambda]^{M}$  are defined by
\begin{equation} \notag
 [\lambda^{-1}]^{N}_{\gamma,n} :=
\begin{cases}
  \frac{\lambda^{-N(n+1-\frac{\gamma-1}{N})}}{N(n+1-\frac{\gamma-1}{N})}, &\gamma=N,N-1,\dots 1,\\
 0, &\gamma = 0, -1\dots -(M-1),
  \end{cases}
\end{equation}
\begin{equation} \notag
[\lambda]^{M}_{\gamma,n} :=
\begin{cases}
0, &\gamma=N, N-1,\dots 1,\\
\frac{\lambda^{M(n+1+\frac{\beta}{M})}}{M(n+1+\frac{\beta}{M})}
, &\gamma = 0, -1, \dots -(M-1).
  \end{cases}
\end{equation}

Then we get
\begin{eqnarray}\label{pltau}
&P_L: =\displaystyle{\sum_{n=0}^{\infty}\frac{ P_n(-\hat\d_L)\tau
(x,t;\epsilon) }
     {\tau (x,t;\epsilon)}\la^{-n}},\qquad
&P_L^{-1}:= \sum_{n=0}^{\infty}\frac{P_n(\hat\d_L)\tau (x+\epsilon,
t;\epsilon) }
     {\tau (x+\epsilon,t;\epsilon)}\la^{-n},\\\label{prtau}
&P_R:= \displaystyle{\sum_{n=0}^{\infty}\frac{ P_n(\hat\d_R)\tau
(x+\epsilon,t;\epsilon)}
     {\tau(x,t;\epsilon)}\la^{n}}, \qquad
   &  P_R^{-1}:= \sum_{n=0}^{\infty}\frac{P_n(-\hat\d_R)
     \tau (x,t; \epsilon)}
     {\tau(x+\epsilon,t;\epsilon)}\la^{n},
     \end{eqnarray}
where Schur polynomial $P_{k}({\hat \d})$ is defined by
\begin{equation}\label{shurfunction}
e^{\sum^{\infty}_{k=1}\frac{1}{k}\d_kz^k}=\sum^{\infty}_{k=0}P_{k}({\hat
\d})z^{k}, \quad {\hat \d}=(\d_1, \frac{1}{2}\d_2, \frac{1}{3}\d_3,
\frac{1}{4}\d_4, ... ).
\end{equation}
Here the operators $\hat\partial_L$ and $\hat\partial_R$ are defined by
\begin{align*}
\hat \d_L&=\left\{\frac{1}{N(n+1-\frac{\alpha-1}{N})}\d_{t_{\alpha,n}}:  1\leq\alpha\leq N\right\}\\[1.0ex]
\hat \d_R&=\left\{\frac{1}{M(n+1+\frac{\beta}{M})}\d_{t_{\beta,n}}: -M+1\leq\beta\leq0\right\}.
\end{align*}

The dressing operators $\P_L$ and $\P_R$ can be expressed by function $\tau(x,t;\epsilon)$:
 \begin{align}\label{pltau}\P_L&=\sum_{n=0}^{\infty}\frac{ P_n(-\hat\d_L)\tau
(x,t;\epsilon) }
     {\tau (x,t;\epsilon)}\La^{-n},\qquad
\P_L^{-1} =  \sum_{n=0}^{\infty}\La^{-n}\frac{P_n(\hat\d_L)\tau (x+\epsilon,
t;\epsilon) }
     {\tau (x+\epsilon,t;\epsilon)},\\\label{prtau}
\P_R&=  \sum_{n=0}^{\infty}\frac{ P_n(\hat\d_R)\tau
(x+\epsilon,t;\epsilon)}
     {\tau(x,t;\epsilon)}\La^{n},\qquad
     \P_R^{-1}
     =  \sum_{n=0}^{\infty}\La^{n}\frac{P_n(-\hat\d_R)
     \tau (x,t; \epsilon)}
     {\tau(x+\epsilon,t;\epsilon)}.
     \end{align}

 One can then find the explicit form of the coefficients $u_i(x,t)$ of the operator $\mathcal{L}$ in terms of the $\tau$-function using eq.\eqref{two dressing} as \cite{DiscreteKP, adler92},
    \begin{align}\label{ui with tau}
 u_i(x,t)&=\frac{P_{N-i}(\hat D_L)\tau (x+(i+1)\ep,
t;\epsilon)\circ \tau
(x,t;\epsilon) }
     {\tau (x,t;\epsilon)\,\tau (x+(i+1)\ep,t;\epsilon)}=\frac{ P_{M+i}(\hat D_R)\tau
(x+\epsilon,t;\epsilon)\circ\tau (x+i\ep,t; \epsilon)}
     {\tau(x,t;\epsilon)\,\tau(x+(i+1)\ep,t;\epsilon)},
 \end{align}
 where $\hat D_L$ and $\hat D_R$ are just the Hirota derivatives corresponding to $\hat \d_L$ and $\hat \d_R$ respectively.

As we all know, interpolated BTH is equivalent to bi-infinite or semi-infinite matrix-formed BTH.
Because what we will consider next is the matrix-formed bigraded Toda hierarchy, following equivalent definitions in matrix form are introduced,
$$\Lambda:=(E_{i,i+1})_{i\in\mathbb{Z_+}}, \ \ u_i:=diag(u_{i,1},u_{i,2},u_{i,3},\dots).$$
After the following transformation
 $u_i(x):=u_{i,j}:=a_{j,j+i}$, the matrix representation of $\L$ can be expressed by $(a_{i,j})_{i,j\geq 1}$ with
\begin{equation}\label{aij}
a_{i,j}(t)=\frac{P_{i-j+N}(\hat D_L)\tau_j\circ\tau_{i-1}}{\tau_{i-1}\tau_j}
=\frac{P_{j-i+M}(\hat D_R)\tau_i\circ\tau_{j-1}}{\tau_{i-1}\tau_j}.
\end{equation}

This immediately imply
\begin{align*}
a_{i,j}&=0,\qquad {\rm if}\quad j<-M+i\quad{\rm or}\quad j>N+i.
\end{align*}
That shows that the Lax matrix $\L$ has the $(N,M)$-band structure.

\sectionnew{Exponential solutions of the BTH}
In paper \cite{solutionBTH}, the rational solutions of the BTH were
introduced already.  In this section, we will introduce the exact (regular)
solutions of the BTH, i.e. the non-negative exponential solutions.

The tau functions of the two-dimensional Toda lattice hierarchy \cite{adler92} can be expressed by
\begin{equation}\label{tau n2Dtoda}
\tau_{i}=\left|\begin{array}{ccccc}\bar C_{0,0} & \bar C_{0,1}&\dots &\bar C_{0,i-1}\\ \bar C_{1,0} &\bar C_{1,1} &\dots &\bar C_{1,i-1}\\ \dots &\dots &\dots &\dots \\
\bar C_{i-1,0} &\bar C_{i-1,1}&\dots&\bar C_{i-1,i-1}\end{array}\right|,
\end{equation}
where $\bar C_{i,j}$ can have be in form of following inner product using arbitrary density function $\rho(\la,\mu)$
\begin{eqnarray*}
\bar C_{i,j}&=&\langle\la^{i},\mu^{j}\rho(\la,\mu) \rangle.
\end{eqnarray*}
Here the inner product can be chosen as following integral representation
\begin{eqnarray*}
\bar C_{i,j}&=&\int\int\rho(\la,\mu)\la^{i}\mu^{j}e^{\sum_{n=0}^{\infty}x_n\la^n+\sum_{n=0}^{\infty}y_n\mu^n}d \la d\mu\\
&=&\sum_{k,l=0}^{\infty}\bar c_{i,j,k,l}P_k(x)P_l(y),
\end{eqnarray*}
where $P_k(x)$ and $P_l(y)$ are Schur function introduced in eq.\eqref{shurfunction}.

We should note here that the coefficients $\bar c_{i,j,k,l}$ are totally independent.

As the original tridiagonal Toda lattice is $(1,1)$-reduction of the two-dimensional Toda lattice hierarchy. Therefore to get the solution of the  tridiagonal Toda lattice hierarchy, we need to add factor
$\delta(\la-\mu)$ under the integral in the definition of $\bar C_{i,j}$, i.e. the element $\bar C_{i,j}$ in tau function of the  tridiagonal Toda lattice hierarchy \cite{ISo} becomes
\begin{eqnarray}\label{11BTH}
\int\int\rho(\la,\mu)\delta(\la-\mu)\la^{i}\mu^{j}e^{\sum_{n=0}^{\infty}x_n\la^n+\sum_{n=0}^{\infty}y_n\mu^n}d \la d\mu,
\end{eqnarray}
which can further lead to
\begin{eqnarray}\label{11BTH2}
\int\rho(\la,\la)\la^{i+j}e^{\sum_{n=0}^{\infty}(x_n+y_n)\la^n}d \la.
\end{eqnarray}
After changing time variables $x,y$  to $t_{\alpha},t_{\beta}$ in BTH, eq.\eqref{11BTH2} become a new function
$$\int\rho(\la,\la)\la^{i+j}e^{\xi_L(\la,t_{\alpha})+\xi_R(\la,t_{\beta})}d \la$$ which corresponds to  $(1,1)$-BTH.

Denote $\omega_N$ and $\omega_M$ as the N-th root and M-th root of unit.
For $(N,M)$-BTH which is a
 generalization of the tridiagonal Toda lattice hierarchy, new function $C_{i,j}$(new form of $\bar C_{i,j}$) has the following form

\begin{eqnarray*}
C_{i,j}&=&\int\int\rho(\la,\mu)\delta(\la^N-\mu^M)\la^{i}\mu^{j}e^{\xi_L(\la,t_{\alpha})+\xi_R(\mu,t_{\beta})}d \la d\mu\\
&=&\sum_{p=0}^{N-1}\sum_{q=0}^{M-1}\int\rho(\omega_N^p\la^{\frac 1N},\omega_M^q\la^{\frac 1M})(\omega_N^p\la^{\frac 1N})^{i}
(\omega_M^q\la^{\frac 1M})^{j}e^{\xi_L(\omega_N^p\la^{\frac 1N},t_{\alpha})
+\xi_R(\omega_M^q\la^{\frac 1M},t_{\beta})}d \la.
\end{eqnarray*}

Therefore tau functions of the BTH  can be explicitly written in the form \cite{solutionBTH}
\begin{equation}\label{tau n}
\tau_{i}=\left|\begin{array}{ccccc}C_{0,0} & C_{0,1}&\dots &C_{0,i-1}\\ C_{1,0} &C_{1,1} &\dots &C_{1,i-1}\\ \dots &\dots &\dots &\dots \\
C_{i-1,0} &C_{i-1,1}&\dots&C_{i-1,i-1}\end{array}\right|.
\end{equation}
If we consider the case in  finite dimension(dimension is n), i.e. \begin{eqnarray}\rho(\omega_N^p\la^{\frac 1N},\omega_M^q\la^{\frac 1M})=
\sum_{k=1}^{n}\rho_0(\omega_N^p\la^{\frac 1N},\omega_M^q\la^{\frac 1M})\delta(\la-\la_k),\end{eqnarray} then
\begin{eqnarray}
C_{i,j}&=&\sum_{p=0}^{N-1}\sum_{q=0}^{M-1}\sum_{k=1}^{n}\rho_0(\omega_N^p\la_k^{\frac 1N},\omega_M^q\la_k^{\frac 1M})
(\omega_N^p\la_k^{\frac 1N})^{i}(\omega_M^q\la_k^{\frac 1M})^{j}e^{\xi_L(\omega_N^p\la^{\frac 1N},t_{\alpha})
+\xi_R(\omega_M^q\la^{\frac 1M},t_{\beta})} .
\end{eqnarray}
After denoting $C_{0,0}$ as $\tau_1$, we can rewrite tau functions
into the following bi-directional Wronskian form
\begin{equation}\label{taunderivative}
\tau_{i}=\left|\begin{array}{ccccc}\tau_1 & \d_{t_{-M+1,0}}\tau_1&\dots &\d^{i-1}_{t_{-M+1,0}}\tau_1\\ \d_{t_{N,0}}\tau_1 &\d_{t_{N,0}}
\d_{t_{-M+1,0}}\tau_1 &\dots &\d_{t_{N,0}}\d^{i-1}_{t_{-M+1,0}}\tau_1\\ \dots &\dots &\dots &\dots \\
\d^{i-1}_{t_{N,0}}\tau_1 &\d^{i-1}_{t_{N,0}}
\d_{t_{-M+1,0}}\tau_1&\dots&\d^{i-1}_{t_{N,0}}\d^{i-1}_{t_{-M+1,0}}\tau_1\end{array}\right|.
\end{equation}
Then $\tau_1$ has the following form
 \begin{eqnarray}
\tau_1&=&\sum_{p=0}^{N-1}\sum_{q=0}^{M-1}\sum_{k=1}^{n}
\rho_0(\omega_N^p\la_k^{\frac 1N},\omega_M^q\la_k^{\frac 1M})e^{\xi_L(\omega_N^p\la^{\frac 1N}, t_{\alpha})
+\xi_R(\omega_M^q\la^{\frac 1M}, t_{\beta})}.
\end{eqnarray}
For $(N,M)$ case,  for each $m\leq n$
\begin{eqnarray*}\notag
\tau_m=
&&\sum_{1\leq i_1\leq i_2\leq\dots \leq i_m\leq
n}\sum_{i_j',i_k'=1}^N \sum_{\bar i_j,\bar i_k=1}^M\prod_{1\leq
k<j\leq m}
(\omega_N^{i_j'}\omega_{i_j}^{\frac1N}-\omega_N^{i_k'}\om_{i_{k}}^{\frac1N})
\vec{}(\omega_M^{\bar i_j}\omega_{i_j}^{\frac1M}-\omega_M^{\bar
i_k}\om_{i_{k}}^{\frac1M})\\
&&\prod _{j,k,l=1}^m
\rho_0(\omega_N^{i_k'}\la_{i_j}^{\frac 1N},\omega_M^{\bar
i_l}\la_{i_j}^{\frac 1M})E_{i_j,i'_k,\bar i_l},
\end{eqnarray*}
where
\begin{eqnarray}
E_{i_j,i'_k,\bar i_l}=e^{\xi_L(\omega_N^{i'_k}\la_{i_j}^{\frac 1N}, t_{\alpha})
+\xi_R(\omega_M^{\bar i_l}\la_{i_j}^{\frac 1M}, t_{\beta})}.
\end{eqnarray}
We can find it  there are  ${n \choose m}(NM)^m $ terms in  $\tau_m$.

The theory above is about $(N,M)$-BTH. To see it clearly, we will
further consider one kind of specific example, i.e. the $(1,M)$-BTH clearly in the following.

Before that firstly we will consider the finite-sized
Lax matrix of the $(1,M)$-BTH.

If Lax matrix is supposed to have n different eigenvalues, i.e. there exists matrix
$\Phi$ s.t.
\begin{eqnarray}\notag
\Phi L\Phi^{-1}=dig(\la_1, \la_2,\dots, \la_n).
\end{eqnarray}
Then from the theory on the $(N,M)$-BTH, we can get the first tau
function of the $(1,M)$-BTH as following
\begin{eqnarray}
\tau_1&=&\sum_{q=0}^{M-1}\sum_{k=1}^{n}\rho_0(\la_k,\omega_M^q\la_k^{\frac 1M})
e^{\xi_R(\omega_M^q\la_k^{\frac 1M}, t_{\beta})}.
\end{eqnarray}
Similarly for every integer $m\leq n$, tau function $\tau_m$ has following form
\begin{eqnarray*}
&&\tau_m=\\
&&
\sum_{1\leq i_1\leq i_2\leq\dots \leq i_m\leq n}\sum_{i_j',i_k'=1}^M\prod_{1\leq k<j\leq m}
(\omega_{i_j}-\om_{i_k})\prod_{1\leq k<j\leq m}
(\omega_M^{i_j'}\omega_{i_j}^{\frac1M}-\omega_M^{i_k'}
\om_{i_{k}}^{\frac1M})
\prod _{j=1}^m\rho_0(\la_{i_j},\omega_M^{i_j'}\la_{i_j}^{\frac 1M})E_{i_j,i'_{j}},
\end{eqnarray*}
where
\begin{eqnarray*}E_{i_j,i'_{j}}=e^{\xi_R(\omega_M^{i'_{j}}\la_{i_j}^{\frac 1M}, t_{\beta})}.
\end{eqnarray*}
Considering the primary dependence, $\tau_1$ can be written as
\begin{eqnarray}
\tau_1=\sum_{q=0}^{M-1}\sum_{k=1}^{n}\rho_0(\la_k,\omega_M^q\la_k^{\frac 1M})e^{\sum_{s=1}^M(\omega_M^q\la_k^{\frac 1M})^{ s}t_{s-M,0}} .
\end{eqnarray}

To see it more clearly, we will further consider a specific example in
the following section, i.e. the exponential solution of the $(1,2)$-BTH.

\sectionnew{Regular solution of $(1,2)$-BTH}
One can obtain the general finite dimensional solution for the $(N,M)$-BTH with exponential
functions.  However most of the solutions are complex and have singular points. In this section, we will consider the exponential
 solution particularly the regular solutions of the
  $(1,2)$-BTH. By this regular solution, we see the difference between $(1,2)$-BTH and original Toda hierarchy from a geometric viewpoint.

For $(1,2)$ case, the solution can be expressed as
\\
\begin{eqnarray}
\tau_1&=&\sum_{k=1}^{n}\rho_0(\la_k,\la_k^{\frac 12})
e^{\xi_R(\la_k^{\frac 12},t_{\beta})}+\rho_0(\la_k,\omega_2\la_k^{\frac 12})
e^{\xi_R(\omega_M\la_k^{\frac 12},t_{\beta})},
\end{eqnarray}
where $\omega_2=-1.$

If we only consider the primary dependence which means we let tau function only depend on primary time variables, then the first solution $\tau_1$ in form of one by one matrix has following form
\begin{eqnarray}
\tau_1&=&\sum_{k=1}^{n}\rho_0(\la_k,\la_k^{\frac 12})
e^{\omega_{k}^{\frac12}\la_k^{\frac 12}t_{-1,0}+\omega_{k}\la_kt_{0,0}}+\rho_0(\la_k,-\la_k^{\frac 12})
e^{-\omega_{k}^{\frac12}\la_k^{\frac 12}t_{-1,0}+\omega_{k}\la_kt_{0,0}}.
\end{eqnarray}

Let us assume that the Lax matrix is semi-simple and has  distinct eigenvalues,
$(\lambda_1,\lambda_2,\lambda_3)$.
 If the $\lambda_{k},k=1,2,3$ is negative,
the  real solution  can be written as the following form
\begin{eqnarray}\tau_1&=&
\sum_{k=1}^n\rho_0(\la_k,\la_k^{\frac 12})e^{i|\lambda_{k}|^{\frac12}t_{-1,0}+\lambda_{k}t_{0,0}}
+\rho_0(\la_k,-\la_k^{\frac 12})e^{-i|\lambda_{k}|^{\frac12}t_{-1,0}+\lambda_{k}t_{0,0}}\\
&=&
\sum_{k=1}^n A'_{k}
 \cos(|\lambda_{k}|^{\frac12}t_{-1,0}+\theta_k)e^{\lambda_{k}(t_{0,0}+t_{1,0})},
 \end{eqnarray}
 where
 $A'_{k}:=\sqrt{\rho_0(\la_k,\la_k^{\frac 12})^2+\rho_0(\la_k,-\la_k^{\frac 12})^2}$
 and $\theta_k$ depend on $\{\rho_0(\la_k,\la_k^{\frac 12}),\rho_0(\la_k,-\la_k^{\frac 12}),|\lambda_{k}|^{\frac12}t_{-1,0}\}.$
  This is in fact a periodic solution about $t_{-1,0}$ time variable and it has singular points.\\
 If we set $0<\lambda_{1}<\lambda_{2}<\lambda_{3}, \rho_0\geq0$, it will lead to regular solutions.
In this case as a simple but an interesting example,
 we will consider a regular solution for the $(1,2)$-BTH with $3\times 3$-sized  Lax matrix, and
 discuss some geometric structure of the solution in the following part.
Then the regular function $\tau_1$ is given by
\begin{equation*}
\tau_1=
\sum_{k=1}^3 A_{k}
\cosh(\lambda_{k}^{\frac12}t_{-1,0}+\theta_k)e^{\lambda_{k}(t_{0,0}+t_{1,0})}=\sum_{k=1}^3 C_{k}
 E_{k},
\end{equation*}
 where $A_k$ and $\theta_k$ are arbitrary constants, $C_{k}:=A_{k}
 \cosh(\lambda_{k}^{\frac12}t_{-1,0}+\theta_k)$ and $E_k:=e^{\lambda_k(t_{0,0}+t_{1,0})}$.
 We also write  $S_{k}:=A_{k} \sinh(\lambda_{k}^{\frac12}t_{-1,0}+\theta_k)$ in the following part.

For $\tau_1$ being positive definite, $A_i>0$ is supposed to hold. Also we set all $\theta_k=0$ (this is
necessary for $\tau_k$ being sign definite).
Then the second tau function$\tau_2$ and the third tau function $\tau_3$ in eq.\eqref{taunderivative} are given by
\begin{align*}
\tau_2&=
\left|\begin{matrix}
\tau_1&\d_{1,0}\tau_1\\
\d_{-1,0}\tau_1&\d_{-1,0}\d_{1,0}\tau_1
\end{matrix}\right|=
\left|\left(\begin{matrix}
C_1E_{1}&C_2E_{2}&C_3E_{3}\\
\lambda_1^{\frac12}S_1E_{1}&\lambda_2^{\frac12}S_2E_{2}&\lambda_3^{\frac12}S_3E_{3}\\
\end{matrix}\right)
\left(\begin{matrix}
1&\lambda_1\\
1&\lambda_2\\
1&\lambda_3
\end{matrix}\right)\right|\\
&=\sum_{i,j=1,i< j}^3 (C_{i}S_{j}\lambda_{j}^{\frac12}-C_{j}S_{i}\lambda_{i}^{\frac12})(\lambda_j-\lambda_i)E_{i}E_{j},
\end{align*}

\begin{align*}
\tau_3&=\left|\begin{matrix}
\tau_1&\d_{1,0}\tau_1&\d_{1,0}^2\tau_1\\
\d_{-1,0}\tau_1&\d_{-1,0}\d_{1,0}\tau_1&\d_{-1,0}\d_{1,0}^2\tau_1\\
\d_{-1,0}^2\tau_1&\d_{-1,0}^2\d_{1,0}\tau_1&\d_{-1,0}^2\d_{1,0}^2\tau_1
\end{matrix}\right|\\
&=
\left|\left(\begin{matrix}
C_1E_{1}&C_2E_{2}&C_3E_{3}\\
\lambda_1^{\frac12}S_1E_{1}&\lambda_2^{\frac12}S_2E_{2}&\lambda_3^{\frac12}S_3E_{3}\\
\lambda_1C_1E_{1}&\lambda_2C_2E_{2}&\lambda_3C_3E_{3}\\
\end{matrix}\right)
\left(\begin{matrix}
1&\lambda_1&\lambda_1^2\\
1&\lambda_2&\lambda_2^2\\
1&\lambda_3&\lambda_3^2
\end{matrix}\right)\right| \\
&=\left(\sum_{i\to j\to k}\lambda_i^{\frac12}|\lambda_j-\lambda_k|S_iC_jC_k\right)\prod_{i<j}(\lambda_j-\lambda_i)E_1E_2E_3,
\end{align*}
where $\sum_{i\to j\to k}$ implies the sum over the cyclic permutation on $\{1,2,3\}$.
Therefore $\tau_1$ is always positive.

For  $\tau_2$, taking derivative can imply
\begin{align}\label{derivativepositive}
&\d_{t_{-1,0}}(C_{i}S_{j}\lambda_{j}^{\frac12}-C_{j}S_{i}\lambda_{i}^{\frac12})\\\nonumber
&=S_{i}S_{j}(\lambda_{i}\lambda_{j})^{\frac12}+C_{i}C_{j}\lambda_{j}-S_{j}S_{i}(\lambda_{i}\lambda_{j})^{\frac12}-C_{j}C_{i}\lambda_{i}\\\nonumber
&=C_{i}C_{j}(\lambda_{j}-\lambda_{i})>0,
\end{align}
which means $\tau_2$ always increase with variable $t_{-1,0}$.
Because when $t_{-1,0}=0$,
\begin{align*}
C_{i}S_{j}\lambda_{j}^{\frac12}-C_{j}S_{i}\lambda_{i}^{\frac12}=0,
\end{align*}
 when $t_{-1,0}>0$, $\tau_2$ is always positive.
Because
\begin{align*}
\lim _{t_{-1,0}\rightarrow 0}\frac{\d_{1,0}\tau_2}{\tau_2}&=\lim _{t_{-1,0}\rightarrow 0}
\frac{\sum_{i,j=1,i< j}^3 (C_{i}S_{j}\lambda_{j}^{\frac12}-C_{j}S_{i}\lambda_{i}^{\frac12})
(\lambda_j^2-\lambda_i^2)E_{i}E_{j}}{\sum_{i,j=1,i< j}^3 (C_{i}S_{j}\lambda_{j}^{\frac12}-C_{j}S_{i}\lambda_{i}^{\frac12})(\lambda_j-\lambda_i)E_{i}E_{j}}\\
&=\lim _{t_{-1,0}\rightarrow 0}\frac{\sum_{i,j=1,i< j}^3 C_{i}C_{j}
(\lambda_j^2-\lambda_i^2)E_{i}E_{j}}{\sum_{i,j=1,i< j}^3 C_{i}C_{j}(\lambda_j-\lambda_i)E_{i}E_{j}}\\
&=\frac{\sum_{i,j=1,i< j}^3
(\lambda_j^2-\lambda_i^2)E_{i}E_{j}}{\sum_{i,j=1,i< j}^3 (\lambda_j-\lambda_i)E_{i}E_{j}},
\end{align*}
therefore $t_{-1,0}=0$ is a removable singular point.
Using the formula \eqref{aij} for $a_{i,j}$ of the Lax matrix, we have
\begin{align*}
a_{1,1}&=\d_{1,0}\ln\tau_1
=\frac{\lambda_1C_{1}E_{1}+\lambda_2C_{2}E_{2}+\lambda_3C_{3}E_{3}}{C_{1}E_{1}+C_{2}E_{2}+C_{3}E_{3}},\\
a_{2,2}&=\d_{1,0}\ln\frac{\tau_2}{\tau_1}=\frac{\sum_{i,j=1,i< j}^3 (C_{i}S_{j}\lambda_{j}^{\frac12}-C_{j}S_{i}\lambda_{i}^{\frac12})(\lambda_j^2-\lambda_i^2)E_{i}E_{j}}{\sum_{i,j=1,i< j}^3 (C_{i}S_{j}\lambda_{j}^{\frac12}-C_{j}S_{i}\lambda_{i}^{\frac12})(\lambda_j-\lambda_i)E_{i}E_{j}}-a_{11},\\
a_{3,3}&=\d_{1,0}\ln\frac{\tau_3}{\tau_2}
=\sum_{i=1}^3\lambda_i-(a_{1,1}+a_{2,2}).
\end{align*}
Assuming the ordering for the eigenvalues as
\begin{equation*}
\lambda_1<\lambda_2<\lambda_3,
\end{equation*}
one can obtain the following asymptotic {\bf sorting property}  of the Lax matrix,
\begin{align*}\L&\longrightarrow
\left(\begin{matrix}
\lambda_3&1&0\\
0&\lambda_2&1\\
0&0&\lambda_1\\
\end{matrix}\right),\qquad{\rm for}\quad t_{-1,0}\to \infty,\\[2.0ex]
\L&\longrightarrow\begin{cases}
\left(\begin{matrix}
\lambda_3&1&0\\
0&\lambda_2&1\\
0&0&\lambda_1\\
\end{matrix}\right),\qquad{\rm for}\quad t_{1,0}\to\infty,\\
\left(\begin{matrix}
\lambda_1&1&0\\
0&\lambda_2&1\\
0&0&\lambda_3\\
\end{matrix}\right),\qquad{\rm for}\quad t_{1,0}\to-\infty.
\end{cases}
\end{align*}

To see the orbit generated by the solution, we consider the projection $\pi$ of the Lax matrix
on the diagonal part, i.e.
\begin{equation*}
\pi : \quad\L=\begin{pmatrix}
a_{1,1}& 1 & 0\\
a_{2,1}& a_{2,2} & 1\\
a_{3,1}& a_{3,2} & a_{3,3}
\end{pmatrix}\quad \longmapsto \quad {\rm diag}(\L)\equiv (a_{1,1},a_{2,2},a_{3,3}).
\end{equation*}
Figure \ref{x=1bth} illustrates the image of the map $\pi$:  The left panel (a) shows the image
in the case of $(1,2)$-BTH  for $-5\le t_{0,0}\le 5$ and $-5\le t_{0,1}\le 5$.
The right panel (b) of the figure shows the case of the original Toda lattice, that is, the corresponding
Lax matrix is $3\times 3$-sized tridiagonal matrix. That example gives following proposition.
\begin{proposition} In the case of the original Toda lattice, one can show that
all the orbits have to cross the center point $(\lambda_2,\sigma_1-2\lambda_2,\lambda_2)$ with $\sigma_1=\sum_{j=1}^3\lambda_j$.
 However, the orbits for the $(1,2)$-BTH have no such restriction, but those
still go through the points close to the center.
\end{proposition}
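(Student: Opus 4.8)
The plan is to reduce, in each case, the statement that the diagonal orbit meets $(\lambda_2,\sigma_1-2\lambda_2,\lambda_2)$ to a single algebraic equation in the two boundary weights, and then compare. Since ${\rm tr}\,\mathcal L=a_{1,1}+a_{2,2}+a_{3,3}$ is conserved by every flow and equals $\sigma_1$, it is enough to find a point at which $a_{1,1}=\lambda_2$ and $a_{3,3}=\lambda_2$ simultaneously; the middle entry is then $\sigma_1-2\lambda_2$ automatically.

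For the original Toda lattice ($N=M=1$) one has $B_{1,0}=B_{0,0}=\mathcal L$, and $(\mathcal L)_++(\mathcal L)_-=\mathcal L$ makes the two primary flows act identically on $\mathcal L$; hence the orbit depends only on $s:=t_{1,0}+t_{0,0}$, the bidirectional Wronskian \eqref{taunderivative} collapses to the Hankel form $\tau_i=\det(\partial_s^{\,a+b}\tau_1)_{0\le a,b\le i-1}$, and $\tau_1=\sum_{k=1}^{3}A_kE_k$ with $E_k=e^{\lambda_k s}$, $A_k>0$. Expanding the Hankel determinants gives $\tau_2=\sum_{i<j}A_iA_j(\lambda_j-\lambda_i)^2E_iE_j$ and $\tau_3=A_1A_2A_3\,e^{\sigma_1 s}\prod_{i<j}(\lambda_j-\lambda_i)^2$, so by \eqref{aij} $a_{1,1}=\partial_s\ln\tau_1$ and $a_{3,3}=\sigma_1-\partial_s\ln\tau_2$. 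I would then record the two reductions: $a_{1,1}=\lambda_2$ is equivalent to $(\lambda_2-\lambda_1)A_1E_1=(\lambda_3-\lambda_2)A_3E_3$; and, writing $a_{3,3}$ as the weighted mean of $\lambda_3,\lambda_2,\lambda_1$ with weights $w_{12},w_{13},w_{23}$, $w_{ij}:=A_iA_j(\lambda_j-\lambda_i)^2E_iE_j$, the equation $a_{3,3}=\lambda_2$ becomes $(\lambda_3-\lambda_2)w_{12}=(\lambda_2-\lambda_1)w_{23}$, which after cancelling the common factor $A_2E_2(\lambda_2-\lambda_1)(\lambda_3-\lambda_2)$ is the \emph{same} equation $(\lambda_2-\lambda_1)A_1E_1=(\lambda_3-\lambda_2)A_3E_3$. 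Since $E_1/E_3=e^{(\lambda_1-\lambda_3)s}$ decreases monotonically from $+\infty$ to $0$, this equation has a unique solution $s=s_0$; there $a_{1,1}=a_{3,3}=\lambda_2$, and the orbit crosses the stated center point.

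For the $(1,2)$-BTH I would carry out the identical computation using the $\tau_1,\tau_2,\tau_3$ already displayed in Section~4, with $C_k=A_k\cosh(\lambda_k^{1/2}t_{-1,0})$, $S_k=A_k\sinh(\lambda_k^{1/2}t_{-1,0})$, $E_k=e^{\lambda_k(t_{0,0}+t_{1,0})}$, and again $a_{3,3}=\sigma_1-\partial_{1,0}\ln\tau_2$. Now $a_{1,1}=\lambda_2$ is equivalent to $(\lambda_2-\lambda_1)C_1E_1=(\lambda_3-\lambda_2)C_3E_3$, whereas cancelling the common factors in $(\lambda_3-\lambda_2)v_{12}=(\lambda_2-\lambda_1)v_{23}$, with $v_{ij}:=(C_iS_j\lambda_j^{1/2}-C_jS_i\lambda_i^{1/2})(\lambda_j-\lambda_i)E_iE_j$, shows that $a_{3,3}=\lambda_2$ is equivalent to $(C_1S_2\lambda_2^{1/2}-C_2S_1\lambda_1^{1/2})E_1=(C_2S_3\lambda_3^{1/2}-C_3S_2\lambda_2^{1/2})E_3$. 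These two equations cut out the same locus exactly on the slice $t_{-1,0}=0$ (there $S_k\equiv 0$, $C_k\equiv A_k$, and the system degenerates to the Toda one), but are independent for $t_{-1,0}\ne0$: on a flow line $t_{-1,0}=c\ne0$, choosing $s$ to make $a_{1,1}=\lambda_2$ leaves $a_{3,3}$ generically different from $\lambda_2$, so no common center point is forced --- this is the ``no such restriction'' part. For ``still go through points close to the center'' I would expand for small $c$: $C_k=A_k+O(c^2)$ and $S_k=A_k\lambda_k^{1/2}c+O(c^3)$, whence $C_1S_2\lambda_2^{1/2}-C_2S_1\lambda_1^{1/2}=A_1A_2(\lambda_2-\lambda_1)c+O(c^3)$ and likewise on the other side, so the $a_{3,3}=\lambda_2$ equation coincides with the $a_{1,1}=\lambda_2$ equation up to $O(c^2)$; hence along $t_{-1,0}=c$ the point with $a_{1,1}=\lambda_2$ satisfies $a_{3,3}=\lambda_2+O(c^2)$, i.e. the orbit passes within $O(t_{-1,0}^{\,2})$ of $(\lambda_2,\sigma_1-2\lambda_2,\lambda_2)$, and exactly through it when $t_{-1,0}=0$.

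The step I expect to be the main obstacle is this last one: ``close to the center'' is only a qualitative statement, so one must commit to a precise reading and control the error. Promoting ``the equation holds up to $O(c^2)$'' to ``$a_{3,3}=\lambda_2+O(c^2)$'' needs a short implicit-function-theorem or monotonicity argument, and one should check the implied constant is uniform on the range of $(t_{0,0},t_{-1,0})$ displayed in Figure~\ref{x=1bth}; an alternative is to show directly that, along each flow line, the value of $a_{3,3}$ at the point where $a_{1,1}=\lambda_2$ stays in $[\lambda_1,\lambda_3]$ and depends smoothly on $t_{-1,0}$, equalling $\lambda_2$ at $t_{-1,0}=0$. A minor bookkeeping point is to justify the collapse of the Toda $\tau$-functions to Hankel determinants in the present conventions; this is immediate from $(\mathcal L)_++(\mathcal L)_-=\mathcal L$ but should be written out.
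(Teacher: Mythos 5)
Your proposal is correct, but it reaches the conclusion by a genuinely different route from the paper. The paper never writes down the condition $a_{1,1}=\lambda_2$ explicitly; instead it postulates a point of the orbit with $a_{1,1}=a_{3,3}=\Delta_1$, expands the characteristic polynomial to get the cubic \eqref{crossing equation} $\Delta_1^3-\sigma_1\Delta_1^2+\sigma_2\Delta_1-\sigma_3+a_{3,1}=0$ together with $a_{2,1}+a_{3,2}=2\sigma_1\Delta_1-3\Delta_1^2-\sigma_2$, and then uses the Hirota--derivative formulas \eqref{aij} to prove the strict positivity of $a_{2,1}$, $a_{3,2}$ and (for the $(1,2)$-BTH with $t_{-1,0}>0$) of $a_{3,1}=\tau_3\tau_0/(\tau_2\tau_1)$. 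Positivity of $a_{2,1}+a_{3,2}$ rules out $\Delta_1=\lambda_1$ and $\Delta_1=\lambda_3$ among the roots of the cubic when $a_{3,1}=0$, forcing $\Delta_1=\lambda_2$ for the tridiagonal case, while $a_{3,1}>0$ perturbs the cubic away from having $\lambda_2$ as a root in the $(1,2)$ case. You instead compute the loci $\{a_{1,1}=\lambda_2\}$ and $\{a_{3,3}=\lambda_2\}$ directly from the explicit Hankel/Wronskian tau functions and observe that they coincide identically for Toda but split at order $t_{-1,0}^2$ for the $(1,2)$-BTH. What each approach buys: yours gives a constructive proof of existence and uniqueness of the crossing point on each Toda flow line (via monotonicity of $E_1/E_3$) and a quantitative $O(t_{-1,0}^2)$ estimate of how near the $(1,2)$-orbit comes to the center, neither of which the paper supplies; the paper's argument, being purely spectral, applies verbatim to orbits generated by any (also higher) flows and, more importantly, yields a clean \emph{exact} obstruction --- $a_{3,1}>0$ means the cubic cannot vanish at $\Delta_1=\lambda_2$, so the $(1,2)$-orbit provably never hits the center exactly.

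The one genuine soft spot in your write-up is the phrase ``leaves $a_{3,3}$ generically different from $\lambda_2$'': as stated this only shows the two defining equations are not identical, not that they have no common solution on a given flow line with $t_{-1,0}\neq 0$. To close this you should import the paper's observation that $a_{3,1}=\tau_3\tau_0/(\tau_2\tau_1)>0$, which via \eqref{crossing equation} excludes $\Delta_1=\lambda_2$ outright. Your other caveat --- that ``close to the center'' needs a precise reading and a uniform error bound --- is fair, but the paper's own proof is no more rigorous on that point, so your small-$t_{-1,0}$ expansion is if anything a sharper statement than what the paper offers.
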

\begin{proof}
Firstly because the Lax matrix has eigenvalues $\lambda_1, \lambda_2, \lambda_3$ with $\lambda_1<\lambda_2<\lambda_3$. We set the diagonal elements $(a_{1,1},a_{2,2},a_{3,3})$ take values $(\Delta_1,\sigma_1-2\Delta_1,\Delta_1)$.
Then considering matrix
\begin{equation*}
\begin{pmatrix}
\Delta_1& 1 & 0\\
a_{2,1}& \sigma_1-2\Delta_1 & 1\\
a_{3,1}& a_{3,2} &\Delta_1
\end{pmatrix}
\end{equation*}
has eigenvalues $\lambda_1, \lambda_2, \lambda_3$ with $\lambda_1<\lambda_2<\lambda_3$,
we get
\begin{eqnarray*}\label{general identity0}
(\Delta_1-\la)[(\sigma_1-2\Delta_1-\la)(\Delta_1-\la)-a_{3,2}]-a_{2,1}(\Delta_1-\la)+a_{3,1}=-\la^3+\sigma_1\la^2-\sigma_2\la+\sigma_3,
\end{eqnarray*}
where $\sigma_1,\sigma_2,\sigma_3$ are first three fundamental
symmetric polynomials.
 Equation above implies
\begin{eqnarray}\label{general identity1}
a_{3,2}+a_{2,1}=2\sigma_1\Delta_1-3\Delta_1^2-\sigma_2,
\end{eqnarray}
\begin{eqnarray}\label{general identity2}
\Delta_1^2\sigma_1-2\Delta_1^3-(a_{3,2}+a_{2,1})\Delta_1+a_{3,1}=\sigma_3,
\end{eqnarray}
which further leads to
\begin{eqnarray}\label{crossing equation}
\Delta_1^3-\sigma_1\Delta_1^2+\sigma_2\Delta_1-\sigma_3+a_{3,1}=0.
\end{eqnarray}
By eq.\eqref{aij}, we get
\begin{equation}\label{aij1}
a_{3,2}(t)=\frac{P_{2}(\hat D_L)\tau_2\circ\tau_{2}}{\tau_{2}\tau_2}
=\frac{P_{1}(\hat D_R)\tau_3\circ\tau_{1}}{\tau_{2}\tau_2}.
\end{equation}
Because we choose all eigenvalues to be positive,
\begin{equation}\label{aij2}
D_{1,0}^2\tau_2\circ\tau_{2}=\sum_{1\leq i,j,k\leq
3}(C_{i}S_{j}\lambda_{j}^{\frac12}-C_{j}S_{i}\lambda_{i}^{\frac12})(C_{j}S_{k}\lambda_{k}^{\frac12}-C_{k}S_{j}\lambda_{j}^{\frac12})
(\la_k-\la_j)(\la_j-\la_i)(\la_k+\la_i+2\la_j)>0,
\end{equation}
therefore $a_{3,2}(t)>0.$
Similarly by eq.\eqref{aij}, we get
\begin{equation}\label{aij1'}
a_{2,1}(t)=\frac{P_{2}(\hat D_L)\tau_1\circ\tau_{1}}{\tau_{1}\tau_1}
=\frac{P_{1}(\hat
D_R)\tau_2\circ\tau_{0}}{\tau_{1}\tau_1}=\frac{\d_{-1,0}\tau_2}{\tau_{1}\tau_1}.
\end{equation}
Because of eq.\eqref{derivativepositive}, we get
\begin{equation}\label{aij2'}
a_{2,1}(t)>0.
\end{equation}
By eq.\eqref{general identity1}, we can get the following identity
must be correct
\begin{eqnarray}\label{general identity1'}
f(\Delta_1):=2\sigma_1\Delta_1-3\Delta_1^2-\sigma_2>0.
\end{eqnarray}
 For triangonal Toda hierarchy, $a_{3,1}=0$, and
eq.\eqref{crossing equation} has there roots $\lambda_1, \lambda_2,
\lambda_3$ with order $\lambda_1<\lambda_2<\lambda_3$. Let
$\Delta_1=\la_3$, we find
\begin{eqnarray}\label{general identity1''}
f(\la_3)&=&2\sigma_1\la_3-3\la_3^2-\sigma_2=2(\la_1+\la_2+\la_3)\la_3-3\la_3^2-(\la_1\la_2+\la_1\la_3+\la_2\la_3)\\\nonumber
&=&(\la_1-\la_3)(\la_3-\la_2)<0,
\end{eqnarray}
which is in contradict with eq.\eqref{general identity1'}. Similarly
we can also find $\Delta_1=\la_1$ is also in contradict with
eq.\eqref{general identity1'}. Therefore the only choice is
$\Delta_1=\la_2$.
 That is why  for  the original Toda lattice,
all the orbits have to cross the center point
$(\lambda_2,\sigma_1-2\lambda_2,\lambda_2)$ with
$\sigma_1=\sum_{j=1}^3\lambda_j$ as shown in Fig.\ref{x=1bth}(b).
 By eq.\eqref{aij}, we get
\begin{equation}\label{aij1'}
a_{3,1}(t)=\frac{\tau_3\tau_{0}}{\tau_{2}\tau_1}.
\end{equation}
So for the $(1,2)$-BTH, $a_{3,1}(t)$ is always positive when
$t_{-1,0}>0$ because of the positivity of tau functions. From that
and considering eq.\eqref{crossing equation}, we can get that there
will be one more part which is close to the crossing point
$(\lambda_2,\sigma_1-2\lambda_2,\lambda_2)$ just as
Fig.\ref{x=1bth}(a).
\end{proof}

\begin{figure}[h!]
\centering
\raisebox{0.85in}{(a)}\includegraphics[scale=0.48]{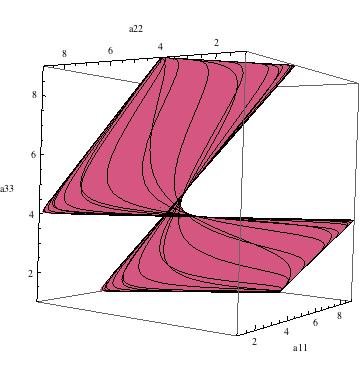}
\hskip 0.3cm
\raisebox{0.85in}{(b)}\raisebox{-0.1cm}{\includegraphics[scale=0.48]{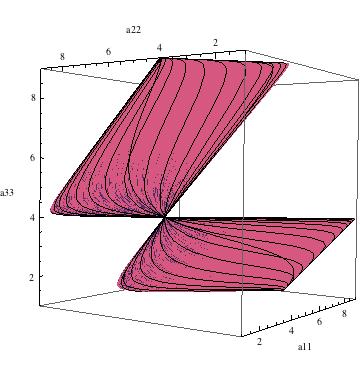}}
 \caption{The image $\pi(\L)=(a_{1,1},a_{2,2},a_{3,3})$:  (a) for the $(1,2)$-BTH, and (b) for the tridiagonal Toda hierarchy.  Those orbits are obtained by changing $t_{0,0}$ and $t_{0,1}$ with fixed $t_{-1,0}=1$. The eigenvalues are given by $\lambda_1=1, \lambda_2=4, \lambda_3=9$. }\label{x=1bth}
\end{figure}
The boundaries of Fig.\ref{x=1bth}(a) are characterized by Fig.\ref{x=20,z=0.eps}(a) and Fig.\ref{x=20,z=0.eps}(b).
\begin{figure}[h!]
\centering
\raisebox{0.85in}{(a)}\includegraphics[scale=0.48]{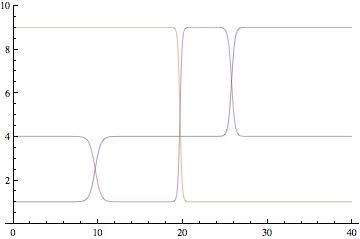}
\hskip 0.3cm
\raisebox{0.85in}{(b)}{\includegraphics[scale=0.48]{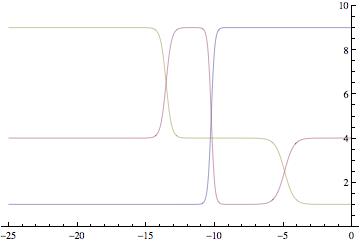}}
 \caption{ Graphs for $(a_{1,1},a_{2,2},a_{3,3})$ of the (1,2)-BTH:  (a)  depending on the parameters $t_{0,0}$ with $t_{-1,0}=20$
 and $t_{0,1}=-2$, (b) depending on the parameter $t_{0,0}$ with  $t_{-1,0}=20 $ and $t_{0,1}=1$.  The eigenvalues are $\lambda_1=1, \lambda_2=4, \lambda_3=9$ and $A_1= A_2= A_3=1$.}\label{x=20,z=0.eps}
\end{figure}
Fixing another time variable $t_{1,1}$, their two dimensional graphs are as Fig.\ref{y=-0.4,z=0}(a) and Fig.\ref{y=-0.4,z=0}(b).
\begin{figure}[h!]
\centering
\raisebox{0.85in}{(a)}
{\includegraphics[scale=0.48]{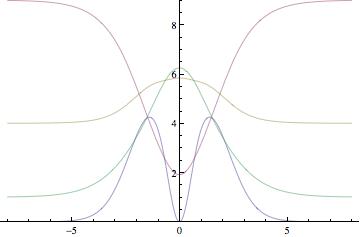}}
\hskip 0.3cm
\raisebox{0.85in}{(b)}{\includegraphics[scale=0.48]{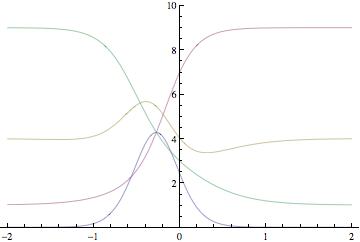}}
 \caption{ Graphs for $(a_{3,1},a_{1,1},a_{2,2},a_{3,3})$ for (1,2)-BTH  (a)depending on parameters $t_{-1,0}$ with $t_{0,1}=0$ and  $ t_{0,0}=-0.4$ , (b) depending on parameters $t_{0,0}$ with $t_{0,1}=0$ and  $ t_{-1,0}=1$, Here $\lambda_1=1, \lambda_2=4, \lambda_3=9, A_1= A_2= A_3=1$.}\label{y=-0.4,z=0}
\end{figure}

After these geometric pictures, moment map \cite{ISo} related to the $(1,2)$-BTH will be given in the next subsection.

\subsection{Moment polytope for the $(1,2)$-BTH}
From the last section, $(\tau_1,\tau_2)$ have following form
\begin{equation*}
\tau_1=\sum_{k=1}^3 C_{k}E_{k},
\end{equation*}

\begin{align*}
\tau_2
&=\sum_{i,j=1,i< j}^3 (C_{i}S_{j}\lambda_{j}^{\frac12}-C_{j}S_{i}\lambda_{i}^{\frac12})(\lambda_j-\lambda_i)E_{i}E_{j}.
\end{align*}

We can treat $(\tau_1,\tau_2)$ as one point of Flag manifold $G/B$,
where $G:=Sl(3,R)$ and
$B$ is a Borel subgroup(upper triangular subgroup) containing the Cartan Lie subgroup of $G$(diagonal torus).
Here we describe the moment polytope
defined by the map \cite{topology},
\begin{align*}
\mu:\quad &G/B\longrightarrow \H^*\\
&(\tau_1,\tau_2)\longmapsto M_{\tau_1}+M_{\tau_2},
\end{align*}
where
   $\H^*$ is the dual of the Cartan Lie subgroup of $G$,
\begin{align*}
 M_{\tau_1}&=\frac{C_{1}E_{1}L_1+C_{2}E_{2}L_2+C_{3}E_{3}L_3}{C_{1}E_{1}+C_{2}E_{2}+C_{3}E_{3}},\\
 M_{\tau_2}&=\frac{\sum_{i,j=1,i< j}^3 (C_{i}S_{j}\lambda_{j}^{\frac12}-C_{j}S_{i}\lambda_{i}^{\frac12})(\lambda_j-\lambda_i)E_{i}E_{j}(L_i+L_j)}{\sum_{i,j=1,i< j}^3 (C_{i}S_{j}\lambda_{j}^{\frac12}-C_{j}S_{i}\lambda_{i}^{\frac12})(\lambda_j-\lambda_i)E_{i}E_{j}}.
\end{align*}
Here the weight vectors $L_i$'s are defined by
\begin{align*}
L_1:=(1,0), \quad L_2:=\frac12\,(-1, \sqrt{3}),\quad  L_3:=\frac12\,(-1, -\sqrt{3}).
\end{align*}
Fig. \ref{x=1hexagon} illustrates the moment polytope (i.e. the graph of $ M_{\tau_1}+M_{\tau_2}$) for our example.
\begin{figure}[h!]
\centering
\raisebox{0.85in}{}\includegraphics[scale=0.43]{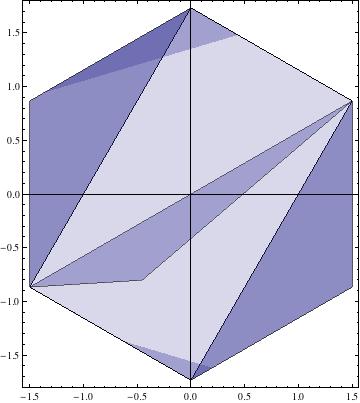}
 \caption{The moment polytope (i.e. $ M_{\tau_1}+M_{\tau_2}$ in $\H^*$) of the (1,2)-BTH: The orbits are obtained by changing $t_{0,0}$ and $t_{0,1}$ with fixed $t_{-1,0}=1$.  The eigenvalues are $\lambda_1=1, \lambda_2=4, \lambda_3=9$.}\label{x=1hexagon}
\end{figure}
In the Fig.\ref{x=1hexagon}, the vertex $\frac12(3,\sqrt{3})$ represent the highest weight $L_1-L_3$ which is the starting point of this $(1,2)$-BTH flow($t_{0,1}$ flow,$t_{0,0}$ flow). The vertex $\frac12(-3,-\sqrt{3})$ represent the lowest weight $-L_1+L_3$ which is the destination of the flow. The boundaries of Fig. \ref{x=1hexagon} correspond to the $\bar G=Sl(2,R)$ for $(1,2)$-BTH associated with two of  $a_{2,1},a_{3,1}$ and $a_{3,2}$ are zeroes. The six vertices in Fig. \ref{x=1hexagon} are fixed points of Cartan subgroup of $G$ which can be generated from the highest weight by the action of the symmetric group $S_3$. These six vertices are in bijection with the elements of the weyl group.

  Till now, we find the moment polytope can tell how the flow of time variables of BTH goes well, meanwhile  the sorting property of the flows and representation in  Lie algebra describing the orbit are also seen from the  Fig. \ref{x=1hexagon}. What about the polytope corresponding to higher rank matrix-formed BTH is an interesting question.

\sectionnew{ $(N,1)$-Bigraded Toda hierarchy  }
In last section, $(1,M)$-BTH are introduced a lot. In this section, we concentrate on the $(N,1)$-BTH. For convenience of calculation, we firstly use interpolated form of the BTH.

In the following part, we will introduce some concrete primary flows of the $(N,1)$-BTH.

{\bf $(2,1)$-BTH:} The Lax operator of $(2,1)$-BTH is as following
\begin{eqnarray}
\L_{2,1}= \La^{2}+ u_1\La+ u_0+ u_{-1}\La^{-1}.
\end{eqnarray}
The equations \eqref{BTHLax}in this case are as follows
\begin{eqnarray}
&& \partial_{2,0} \L_{2,1}= [\Lambda +(1+\La)^{-1} u_1(x), \L_{2,1}],
\end{eqnarray}
which further lead to the following concrete equations
\begin{eqnarray}\label{N=2,M=12,0flow}
\begin{cases}
 \partial_{2,0}  u_1(x)=  u_1(x+\epsilon)- u_1(x)+ u_1(x)(1-\La)(1+\La)^{-1} u_1(x),\\
\partial_{2,0}  u_0(x)=  u_{-1}(x+\epsilon)- u_{-1}(x),\\
\partial_{2,0}  u_{-1}(x)=  u_{-1}(x)(1-\La^{-1})(1+\La)^{-1} u_1(x),
 \end{cases}
\end{eqnarray}
which is equivalent to eq.$(40)$ in \cite{svinin} and also related to the system (10)-(12) proposed in \cite{YTWuXBHuJPA}.

{\bf $(3,1)$-BTH:} The equations of $(3,1)$-BTH is as following
\begin{eqnarray}\label{N=3,M=13,0flow}
\begin{cases}
\partial_{3,0}  u_2(x)=  u_1(x+\epsilon)- u_1(x)+ u_2(x)(1-\La^2)(1+\La+\La^2)^{-1} u_2(x),\\
 \partial_{3,0}  u_1(x)=  u_0(x+\epsilon)- u_0(x)+ u_1(x)(1-\La)(1+\La+\La^2)^{-1} u_2(x),\\
\partial_{3,0}  u_0(x)=  u_{-1}(x+\epsilon)- u_{-1}(x),\\
\partial_{3,0}  u_{-1}(x)=  u_{-1}(x)(1-\La^{-1})(1+\La+\La^2)^{-1} u_2(x),
 \end{cases}
\end{eqnarray}
which can be rewritten as
\begin{eqnarray}\label{N=3,M=13,0flow'}
\begin{cases}
\partial_{3,0}(\sum_{i=0}^2  v_2(x+i\ep))=  v_1(x+\epsilon)- v_1(x)+(\sum_{i=0}^2  v_2(x+i\ep))(1-\La^2) v_2(x),\\
 \partial_{3,0}   v_1(x)=  v_0(x+\epsilon)- v_0(x)+  v_1(x)(1-\La) v_2(x),\\
\partial_{3,0}  v_0(x)=  v_{-1}(x+\epsilon)- v_{-1}(x),\\
\partial_{3,0} v_{-1}(x)=  v_{-1}(x)(1-\La^{-1}) v_2(x),
 \end{cases}
\end{eqnarray}
by transformation $u_2=(1+\La+\La^2) v_2;\ u_i=v_i, i=1,0,-1.$
This is the case when  $n=1,m=-2,l=4$ for eq.(10) in \cite{svinin} after  treating $v(x+i\ep)$ as $v(i)$,
 \begin{equation}
\begin{array}{l}
\displaystyle
\sum_{s=1}^{3}v_{2}^{\prime}(i+s-1)=
\sum_{s=1}^{3}v_{2}(i+s-1)
\times\left(v_{2}(i)-
v_{2}(i+2)\right)
+ v_{1}(i+1) - v_1(i), \\[0.4cm]
\displaystyle
v_{k}^{\prime}(i)=
v_{k}(i)\left(\sum_{s=1}^{2}v_{0}(i+s-1)-
\sum_{s=1}^{2}v_{0}(i+s-1+k)\right) \\[0.4cm]
\displaystyle
+ v_{k-1}(i-1) - v_{k-1}(i),\;\;\;
k = 1,0,-1.
\end{array}
\end{equation}

To generalize the results above, $(N,1)$-BTH will be considered as following.

{\bf $(N,1)$-BTH:} The concrete equations of $(N,1)$-BTH is as following
\begin{eqnarray*}\label{N=3,M=13,0flow'}
\begin{cases}
\partial_{N,0}(\sum_{i=0}^{N-1}  v_{N-1}(x+i\ep))=  v_{N-2}(x+\epsilon)- v_{N-2}(x)+(\sum_{i=0}^{N-1}  v_{N-1}(x+i\ep))(1-\La^{N-1}) v_{N-1}(x),\\
\partial_{N,0} v_{N-2}(x)=  v_{N-3}(x+\epsilon)- v_{N-3}(x)+ v_{N-2}(x)(1-\La^{N-2}) v_{N-1}(x),\\
\dots \dots \dots \dots \dots \dots \\
 \partial_{N,0}   v_1(x)=  v_0(x+\epsilon)- v_0(x)+  v_1(x)(1-\La) v_2(x),\\
\partial_{N,0}  v_0(x)=  v_{-1}(x+\epsilon)- v_{-1}(x),\\
\partial_{N,0} v_{-1}(x)=  v_{-1}(x)(1-\La^{-1}) v_2(x),
 \end{cases}
\end{eqnarray*}
where $u_{N-1}=(1+\La+\dots+\La^{N-1}) v_{N-1};\ u_i=v_i, i=N-2,\dots,0,-1.$

Similarly we can rewrite the equations of $(N,1)$-BTH as following
 \begin{equation}\label{N1}
\begin{cases}
\sum_{s=1}^{N}\partial_{N,0}v_{N-1}(i+s-1)=
\sum_{s=1}^{N}v_{N-1}(i+s-1)
\times\left(v_{N-1}(i)-
v_{N-1}(i+N-1)\right)  \\
+ v_{N-2}(i+1) - v_{N-2}(i), \\
\partial_{N,0}v_{k}(i)=
v_{k}(i)\left(v_{N-1}(i)-
v_{N-1}(i+k)\right)
+ v_{k-1}(i+1) - v_{k-1}(i),
\;\;
k = N-2,N-3,..., -1,
\end{cases}
\end{equation}
where $v_j(i):=v_j(x+i\ep)$ and $v_{-2}=0$.

This is exactly the case when $n=1,m=-N+1,l=N+1$ for eq.(10) in \cite{svinin}. It has another Lax representation which will be discussed in the next section.

\sectionnew{Another Lax representation for primary flows of $(N,1)$-BTH }
In this section, we will introduce another Lax representation of $\d_{N,0}$ flow(i.e. the primary flow) of  $(N,1)$-BTH.

For an arbitrary pair of integers $n\in{\bf N}$  and $m \le n-1,$ we
define infinite collection of first-order differential operators
\begin{equation}
H_i = \partial_{N,0} - v_{N-1}(i, t_{N,0}),\;\;
i\in\Z,
\end{equation}
and \[
G_i = \partial_{N,0} + \sum_{k=1}^{N-1}v_{N-1}(i-k, t_{N,0}) +
\sum_{k=1}^{N}v_{N-1-k}(i-N+1, t_{N,0})H_{i-k}^{-1}...H_{i-2}^{-1}H_{i-1}^{-1}.
\]

Let us define following auxiliary equations on infinite collection
of dressing operators $\{\hat{w}_i,\; i\in\Z\}$:
\begin{equation}
G_i\hat{w}_i = \hat{w}_{i+1-N}\partial_{N,0},\;\;\;
H_i\hat{w}_i = \hat{w}_{i+1}\partial_{N,0},
\label{aux1}
\end{equation}
which can be rewritten in terms of Baker-Akhiezer(BA) function $\psi_i$ as
\begin{equation}
G_i\psi_i = z\psi_{i+1-N},\;\;\;
H_i\psi_i = z\psi_{i+1}.
\label{aux2}
\end{equation}
We can show that the compatibility conditions of eq.\eqref{aux2}
are well-determined system of equations for the fields $\{v_{-1}(i, t_{N,0}),  v_0(i, t_{N,0}),..., v_{N-1}(i, t_{N,0})\}$.

Formally,
consistency condition of (\ref{aux2}) is given by

\[G_{i+1}H_i=H_{i-N+1}G_i.\]

The technical observation will lead to  that eq.(\ref{aux2}) can be rewritten in terms of
$(L,  A_{N,0})$-pair
\[
L(\psi_i) = z\psi_i,\;\;\;  \partial_{N,0}\psi_i = A_{N,0}(\psi_i)
\]
where $L$ and $ A_{N,0}$ are
difference operators acting on BA functions
$\{\psi_i,\;\; i\in\Z\}$ as
\begin{equation}
\begin{array}{l}
\displaystyle
L(\psi_i) = z\psi_{i+N} + \left(\sum_{s=1}^{N}v_{N-1}(i+s-1)\right)\psi_{i+N-1} +
\sum_{j=1}^{N}\frac{1}{z^j}v_{N-1-j}(i)\psi_{i+N-1-j}, \\[0.4cm]
\displaystyle
A_{N,0}(\psi_i) = z\psi_{i+1} + v_{N-1}(i)\psi_{i}.
\end{array}
\label{la0}
\end{equation}
That means
\begin{equation}
\begin{array}{l}
\displaystyle
L = G_{i+N}^{-1}H_{1}\partial_{N,0}+ \left(\sum_{s=1}^{N}v_{N-1}(i+s-1)\right) G_{i+N-1}^{-1} +
\sum_{j=1}^{N}\frac{1}zv_{N-1-j}(i) G_{i+N-1-j}^{-1}H_{i-j}^{-1}...H_{i-2}^{-1}H_{i-1}^{-1}, \\[0.4cm]
\displaystyle
A_{N,0} = H_{1} +v_{N-1}(i).
\end{array}
\end{equation}
Then consistency conditions of eq.(\ref{aux2}) are expressed in a form
of the Lax equation
\begin{equation}
\partial_{N,0}L = [A_{N,0},  L] =A_{N,0}L-  L A_{N,0},
\end{equation}
which exactly leads to eq.\eqref{N1}.

 Till now we have given another Lax construction for primary flows of the $(N,1)$-BTH.

\sectionnew{Lattice Miura transformation}
As we all know, many one-field lattice equations are very useful in a lot of branches of science such as biology, medical science, physics and so on.
In this section, we will introduce one kind of Miura mapping which connects one-field lattice equation with $(N+1)$-field ones(i.e. $(N,1)$-BTH).

Define
\[
F_i=G_{i+N}H_{i+N-1}\dots H_{i+1}H_i,\ \ i\in \Z\]
which is a N-order differential operator, we obtain

\[\label{system}
F_i \psi_i=z^{N+1}\psi_{i+1},\ \ \ H_i\psi_i=z\psi_{i+1}, \ \ i\in \Z.\]

Because
\begin{equation}
G_i\psi_i = z\psi_{i+1-N},\;\;\;
H_i\psi_i = z\psi_{i+1},\ \ i\in \Z,
\end{equation}
we define
\begin{equation}\label{barsystem}
\bar G_i\bar\psi_i = z\bar\psi_{i+1},\;\;\;
\bar H_i\bar\psi_i = z\bar\psi_{i+N+1},\;\;\; \psi_i = \bar\psi_{(N+1)i},
\end{equation}
where
\[
\bar H_i=\d_{N,0}-\sum_{k=1}^{N+1} r_{i+k-1},\ \ \
\bar G_i=\d_{N,0}- r_{i}.\]

The compatibility of system \eqref{barsystem} leads to the following one-field lattice equations
\begin{equation}
\begin{array}{l}
\displaystyle
\sum_{s=1}^{N}\partial_{N,0}r_{i+s-1}
=
\sum_{s=1}^{N}r_{i+s-1}
\times\left(r_{i+N}-
r_{i-1}\right),\ \ i\in \Z.
\end{array}
\end{equation}

Define
\[
\bar F_i=\bar G_{i+N}\bar G_{i+N-1}\dots\bar G_{i+1}\bar G_{i},\]
which is a N-order differential operator. Now  we  consider a new system

\[\label{barsystem2}
\bar F_i \bar\psi_i=z^{N+1}\bar\psi_{i+N+1},\ \ \ \bar H_i\bar\psi_i=z\bar\psi_{i+N+1}.\]
Comparing system \eqref{barsystem} with \eqref{system} will lead to following identification
\begin{equation}\label{Fi}
F_i = \bar F_{(N+1)i},\;\;\;
H_i =\bar H_{(N+1)i},
\end{equation}
which will tell us the Miura transformation in detail.

Eq.\eqref{Fi} can be rewritten in the following equivalent form
\[
G_{i+N}H_{i+N-1}\dots H_{i+1}H_i=\bar G_{(N+1)i+N}\bar G_{(N+1)i+N-1}\dots\bar G_{(N+1)i+1}\bar G_{(N+1)i}.\]

In the following, we will give two specific examples for original Toda hierarchy and $(2,1)$-BTH including their corresponding lattice Miura transformation and one field equations.

\begin{example}

For $(1,1)$-BTH, i.e. original Toda hierarchy, the case $N=1,l=2,n=1,m=0,\bar n=2,\bar m=1$ in \cite{svinin}
\begin{eqnarray}\label{N=1,M=1,0flow}
\begin{cases}
\partial_{1,0}  v_0(x)=  v_{-1}(x+\epsilon)- v_{-1}(x),\\
\partial_{1,0}  v_{-1}(x)=  v_{-1}(x)(1-\La^{-1}) v_1(x),
 \end{cases}
\end{eqnarray}
which can be rewritten as following discrete form
\begin{eqnarray}\label{N=1,M=1,0flow0}
\begin{cases}
\partial_{1,0}  v_0(i)=  v_{-1}(i+1)- v_{-1}(i),\\
\partial_{1,0}  v_{-1}(i)=  v_{-1}(i)( v_1(i)-v_1(i-1)),
 \end{cases}
\end{eqnarray}
by transformation $v_{j}(i):=v_{j}(x+i\ep),j=0,-1.$
The lattice Miura transformation is as following
\begin{equation}\label{lattice}
\begin{cases}
v_{1}(i)
&=r_{2i}+ r_{2i+1},\\
v_0(i)
&=r_{2i-1}r_{2i}.
\end{cases}
\end{equation}

After lattice Miura transformation \eqref{lattice}, the $t_{1,0}$ flow of $(1,1)$-BTH can be transformed into the following
one-field equation, i.e. the Volterra lattice which is a very useful ecological competition model  in biology,
\begin{equation}\label{lattice2}
\partial_{1,0}r_{i}=r_{i}(r_{i+1}-
r_{i-1}),\ \ i\in \Z.
\end{equation}
\end{example}

\begin{example}

For $(2,1)$-BTH,  $N=2,l=3,n=1,m=-1,\bar n=3,\bar m=1,$

the lattice Miura transformation is as following
\begin{equation}\label{latticeMiura}
\begin{cases}
v_{1}(i)
&=r_{3i+2}+ r_{3i+1}+ r_{3i},\\
v_0(i)
&=r_{3i-1}r_{3i}+r_{3i-1}r_{3i+1}+r_{3i}r_{3i+2}+r_{3i+1}r_{3i+3}+r_{3i+2}r_{3i+3}\\&
+ r_{3i+2}^2+r_{3i+1}^2+ r_{3i}^2+2r_{3i+2}r_{3i+1}+ 2r_{3i}r_{3i+1},\\
v_{-1}(i)
&=
-r_{3i-3}r_{3i-2}
r_{3i-1} +r_{3i-2}r_{3i-1}r_{3i-1}\\
&+(r_{3i-2}+r_{3i-1})(r_{3i-1}+r_{3i})(r_{3i+1}+r_{3i}),
\end{cases}
\end{equation}

After lattice Miura transformation \eqref{latticeMiura}, the $t_{2,0}$ flow of $(2,1)$-BTH can be transformed into the following
one-field equation
\begin{equation}\label{lattice3}
\partial_{2,0}(r_{i}+r_{i+1})=(r_{i}+r_{i+1})(r_{i+2}-
r_{i-1}),\ \ i\in \Z.
\end{equation}

\end{example}
The relation between $(N,1)$-BTH and one-field lattice equations give us some hints on how to get solutions of one-field lattice equations from the solutions of BTH (known in \cite{solutionBTH}) by certain transformation.

\sectionnew{Conclusions and discussions}
 We give finite dimensional exponential  solutions of the bigraded Toda Hierarchy(BTH). As
 a specific example of exponential solutions of the BTH, we consider a regular solution for the $(1,2)$-BTH with $3\times 3$ Lax matrix. The difference between $(1,2)$-BTH and original Toda hierarchy is found from a  geometric viewpoint by diagonal projection and moment map.
 Our future work contains finding other regular solutions corresponding to other cases of BTH and their geometric description.
 After that, we construct another Lax representation of
bigraded Toda hierarchy(BTH) and introduce lattice Miura transformation of BTH.
These
Miura transformations give a good connection between primary equation of $(N,1)$-BTH  and one-field lattice equations which include Volterra lattice equation.
What kinds of one-field lattice equations will correspond to the whole hierarchies in BTH is  an interesting questions.

{\bf {Acknowledgments:}}
  {\small
  This work was partly carried out under the suggestion  of Professor Yuji Kodama. Chuanzhong Li would like to thank Professor Yuji Kodama for his suggestion and many useful discussions.  }

\vskip20pt

%%%%%%%%%%%%%%%%%%%%%%%%%%%%%%%%%%%%%%%%%%%%%%

%---------------------------------------------------------------------------------------

\end{document}